\title{Alternative proof for the bias\\ of the hot hand statistic of streak length one}
\author{Maximilian Janisch\footnote{Institut für Mathematik, Universität Zürich. E-Mail: \url{maximilian.janisch@math.uzh.ch} or\\\url{mail@maximilianjanisch.com}.}}
\date{July 15, 2024}
\begin{document}
	\maketitle
    \thispagestyle{empty}

    \begin{abstract}
        For a sequence of $n$ random variables taking values $0$ or $1$, the \emph{hot hand statistic of streak length $k$} counts what fraction of the streaks of length $k$, that is, $k$ consecutive variables taking the value $1$, among the $n$ variables are followed by another $1$. Since this statistic does not use the expected value of how many streaks of length $k$ are observed, but instead uses the realization of the number of streaks present in the data, it may be a biased estimator of the conditional probability of a fixed random variable taking value $1$ if it is preceded by a streak of length $k$, as was first studied and observed explicitly in [Miller and Sanjurjo, 2018]. In this short note, we suggest an alternative proof for an explicit formula of the expectation of the hot hand statistic for the case of streak length one. This formula was obtained through a different argument in [Miller and Sanjurjo, 2018] and [Rinott and Bar-Hillel, 2015].
    \end{abstract}

    For $n\in\N$, let $(X_j)_{j\in\set{1,\dots, n}}$ be a sequence of independent, identically distributed random variables such that $\P(X_1=1)=1-\P(X_1=0)=p\in[0,1]$. We are interested in testing for \emph{hot streaks}, that is, we want to design a test statistic which infers, from a sequence of zeros and ones, whether the sequence of successes is independent or whether there is dependence among the successes.

    We consider here the test statistic studied in \cite{Miller-Sanjurjo} (see also \cite{rinott2015comments}), which will be referred to as the \emph{hot hand statistic}. The hot hand statistic is obtained by counting how often a success follows a streak of $k\in\N$ successes.

    \begin{definition}[Hot hand statistic]
        For $n\in\N_{\ge 2}$ and any family of random variables \\ $X=(X_j)_{j\in\set{1,\dots,n}}$ taking values in $\set{0,1}^n$, the \emph{hot hand statistic for streak length $k\in\N_{\le n-1}$} is defined as 
        \begin{equation}\label{eq:Pkhat}
            \hat P_k(X)\define \frac{X_1 X_2 \dots X_{k+1}+\cdots+X_{n-k}X_{n-k+1}\dots X_n}{X_1X_2\dots X_k+X_2 X_3\dots X_{k+1}+\cdots+X_{n-k}X_{n-k+1}\dots X_{n-1}}.
        \end{equation}
        Note that $\hat P_k(X)$ is only well-defined for those realizations of $X$ where the denominator of \eqref{eq:Pkhat}, henceforth denoted by $D_k(X)$, is non-zero.
    \end{definition}

    We will be specifically interested in the case $k=1$, which we write out here for clarity of exposition:
    \begin{equation*}
        \hat P_1(X) = \frac{X_1 X_2 + X_2 X_3 + \dots + X_{n-1} X_n}{X_1+ X_2+\dots+X_{n-1}}.
    \end{equation*}

    In \cite[Theorem 1]{Miller-Sanjurjo} it was established that, if $n\in\N_{\ge 3}$, $k\in\N_{\le n-2}$, and $X_1,\dots, X_n$ is a family of independent, identically distributed random variables with \begin{equation}\label{eq:Bernoulli}
    \P(X_1=1)=1-\P(X_1=0)=p\in(0,1),
    \end{equation}
    then
    \begin{equation}\label{eq:bias}
        \E\left(\hat P_k(X)\mid D_k(X)\neq 0\right) < p.
    \end{equation}

    \begin{remark}[Context for {\cite[Theorem 1]{Miller-Sanjurjo}}]
        The preceding result may be seen in rough analogy to how the naive sample variance estimator is biased as it uses an empirical estimate of the expected value of the underlying distribution instead of the actual expected value. It is furthermore in line with a broader literature on estimation of Markov chain transition probabilities, see e.g. \cite{Bai}.
    \end{remark}

    Furthermore, the following explicit formula\footnote{Or rather, a variant which is equivalent to the simplified version presented here.} for the case $k=1$ was established, using an auxiliary random variable $\tau$ which, conditionally on $X=(X_1,\dots,X_n)$, is uniformly distributed on those indices $j\in\{2,\dots,n\}$ for which $X_{j-1}=1$. This argument is formulated fully in \cite[Theorem A.3.2]{Miller-Sanjurjo}, but a shortened and intuitively accessible, albeit mathematically not fully written out, version of the proof can be found on \cite[Page 3]{rinott2015comments}.

    \begin{theorem}[Explicit formula for $k=1$, see {\cite[Theorem A.3.2]{Miller-Sanjurjo}} and {\cite[Page 3]{rinott2015comments}}]\label{thm:explicit}
        Let $n\in\N_{\ge 3}$ and let $X_1,\dots,X_n$ as well as $p$ be as in \eqref{eq:Bernoulli}. Then
        \begin{equation}\label{eq:P1hat expectation}
            \E\left(\hat P_1(X)\mid D_1(X)\neq 0\right) = \frac{p}{1-(1-p)^{n-1}}+ \frac{p-1}{n-1}.
        \end{equation}
    \end{theorem}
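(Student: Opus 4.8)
The plan is to bypass the auxiliary random variable $\tau$ and instead compute the unnormalized quantity $\E\bigl(\hat P_1(X)\,\mathbf 1_{\{D_1(X)\neq 0\}}\bigr)$ directly by linearity of expectation, and then divide by $\P(D_1(X)\neq 0)=1-(1-p)^{n-1}$ (the latter equality holding because $D_1(X)=0$ forces $X_1=\dots=X_{n-1}=0$). Write $N(X)\define X_1X_2+\dots+X_{n-1}X_n$ for the numerator of $\hat P_1$. The starting observation is that on the event $\{D_1(X)=0\}$ all of $X_1,\dots,X_{n-1}$ vanish, hence $N(X)=0$ there as well; so, with the convention $X_jX_{j+1}/D_1(X)\define 0$ whenever $D_1(X)=0$, linearity gives
\begin{equation*}
    \E\bigl(\hat P_1(X)\,\mathbf 1_{\{D_1(X)\neq 0\}}\bigr)=\sum_{j=1}^{n-1}\E\!\left(\frac{X_jX_{j+1}}{D_1(X)}\right).
\end{equation*}

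Next I would evaluate each summand by conditioning on $(X_j,X_{j+1})$. The $j$-th term is $0$ unless $X_j=X_{j+1}=1$, an event of probability $p^2$. For $j\le n-2$ both $X_j$ and $X_{j+1}$ sit inside $D_1(X)$, so conditionally on $X_j=X_{j+1}=1$ we have $D_1(X)=2+B_j$ where $B_j\define\sum_{i\in\{1,\dots,n-1\}\setminus\{j,j+1\}}X_i$ is $\mathrm{Binomial}(n-3,p)$-distributed. For $j=n-1$ the variable $X_n$ does not appear in $D_1(X)$, so conditionally on $X_{n-1}=1$ we have $D_1(X)=1+B'$ with $B'\sim\mathrm{Binomial}(n-2,p)$. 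Consequently
\begin{equation*}
    \E\bigl(\hat P_1(X)\,\mathbf 1_{\{D_1(X)\neq 0\}}\bigr)=(n-2)\,p^2\,\E\!\left(\frac{1}{2+B}\right)+p^2\,\E\!\left(\frac{1}{1+B'}\right),\qquad B\sim\mathrm{Binomial}(n-3,p).
\end{equation*}

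The remaining input is the classical evaluation of negative moments of a binomial: for $B\sim\mathrm{Binomial}(N,p)$ with $p\in(0,1)$ one has, by rewriting $\tfrac1{k+1}\binom Nk$ as $\tfrac1{N+1}\binom{N+1}{k+1}$ and $\tfrac1{(k+1)(k+2)}\binom Nk$ as $\tfrac1{(N+1)(N+2)}\binom{N+2}{k+2}$ and resumming the binomial series,
\begin{equation*}
    \E\!\left(\frac{1}{1+B}\right)=\frac{1-(1-p)^{N+1}}{(N+1)\,p},\qquad \E\!\left(\frac{1}{(1+B)(2+B)}\right)=\frac{1-(1-p)^{N+2}-(N+2)\,p\,(1-p)^{N+1}}{(N+1)(N+2)\,p^2},
\end{equation*}
and hence $\E(1/(2+B))=\E(1/(1+B))-\E(1/((1+B)(2+B)))$. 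Plugging $N=n-3$ into the first summand and $N=n-2$ into the second, and dividing by $1-(1-p)^{n-1}$, produces \eqref{eq:P1hat expectation}.

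I expect the one genuinely mechanical step to be the final simplification: after the substitutions the expression contains the three powers $(1-p)^{n-2}$, $(1-p)^{n-1}$, $(1-p)^{n}$ with assorted rational coefficients, and one must check that all $(1-p)^{n-2}$-terms cancel, leaving $\E\bigl(\hat P_1(X)\,\mathbf 1_{\{D_1(X)\neq 0\}}\bigr)=p+\tfrac{(1-p)^{n}-(1-p)}{n-1}$; dividing by $1-(1-p)^{n-1}$ and splitting the resulting fraction then gives exactly $\tfrac{p}{1-(1-p)^{n-1}}+\tfrac{p-1}{n-1}$. Everything else — checking that the $0/0$ convention is applied consistently, and that degenerate cases such as $n=3$ (where one of the binomials has zero trials) are automatically subsumed — is routine bookkeeping, so the argument has no real conceptual obstacle beyond organizing the conditioning cleanly.
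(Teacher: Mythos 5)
Your proposal is correct and follows essentially the same route as the paper: both decompose the numerator by linearity into the products $X_jX_{j+1}$, treat the boundary term $X_{n-1}X_n$ (whose second factor does not appear in the denominator) separately from the $n-2$ interior terms, and reduce everything to the inverse moments $\E\left(1/(1+B)\right)$ and $\E\left(1/(2+B)\right)$ of binomial random variables --- working with the unnormalized expectation and dividing by $\P(D_1(X)\neq 0)$ at the end is only a cosmetic reframing of the paper's direct conditioning on $\set{D_1(X)\neq 0}$. The only substantive difference is in how the auxiliary lemma is established: you evaluate the inverse moments by index-shifting binomial coefficients and resumming the binomial series (passing through $\E\left(1/((1+B)(2+B))\right)$), whereas the paper uses the integral representation $\frac{1}{1+z}=\int_0^1 t^z\,\mathrm dt$ together with the probability generating function and a one-step recursion in the number of trials; both yield the same closed forms, and your final simplification does reduce to \eqref{eq:P1hat expectation} as claimed.
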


    In this short note we aim to provide an alternative proof of Theorem \ref{thm:explicit}, which we deem more direct than the one found in \cite{Miller-Sanjurjo,rinott2015comments}. We start with the following Lemma.

    \begin{lemma}[Expectations of reciprocal of sum of constant and binomially distributed random variable]\label{lem:expectations}
        Let $p\in(0,1]$ and let $n\in\N$. Let $Z_{n,p}$ be a random variable with distribution $\mathrm{Binomial}(n, p)$. Then
        \begin{equation}\label{eq:expectation of 1/(1+Z)}
            \E\left(\frac{1}{1+Z_{n,p}}\right) = \frac{1 - (1-p)^{n+1}}{(n+1)p},
        \end{equation}
        and 
        \begin{equation}\label{eq:expectation of 1/(2+Z)}
            \E\left(\frac{1}{2+Z_{n,p}}\right)=\frac{(1-p)^{n+2}+(n+2) p-1}{(n+1) (n+2) p^2}
        \end{equation}
    \end{lemma}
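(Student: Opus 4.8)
The plan is to evaluate both expectations directly as binomial sums and to reduce everything to the binomial theorem, using the elementary identity $\frac{1}{k+1}\binom{n}{k}=\frac{1}{n+1}\binom{n+1}{k+1}$ (equivalently, the integral representation $\frac{1}{k+1}=\int_0^1 t^k\,dt$).

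For \eqref{eq:expectation of 1/(1+Z)} I would expand $\E\bigl(\frac{1}{1+Z_{n,p}}\bigr)=\sum_{k=0}^n\frac{1}{k+1}\binom{n}{k}p^k(1-p)^{n-k}$, replace $\frac{1}{k+1}\binom{n}{k}$ by $\frac{1}{n+1}\binom{n+1}{k+1}$, pull out the factor $\frac{1}{(n+1)p}$, and reindex by $j=k+1$, using $n-k=(n+1)-(k+1)$. The remaining sum $\sum_{j=1}^{n+1}\binom{n+1}{j}p^j(1-p)^{n+1-j}$ equals $1-(1-p)^{n+1}$ by the binomial theorem (the missing $j=0$ term is $(1-p)^{n+1}$), which gives the claimed formula at once. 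Equivalently, $\E\bigl(\frac{1}{1+Z_{n,p}}\bigr)=\int_0^1\E\bigl(t^{Z_{n,p}}\bigr)\,dt=\int_0^1(1-p+pt)^n\,dt$, and the right-hand side integrates to the same expression.

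For \eqref{eq:expectation of 1/(2+Z)} the most economical route I see is the integral representation $\frac{1}{k+2}=\int_0^1 t^{k+1}\,dt$, which yields $\E\bigl(\frac{1}{2+Z_{n,p}}\bigr)=\int_0^1 t(1-p+pt)^n\,dt$; the substitution $u=1-p+pt$ turns this into $\frac{1}{p^2}\int_{1-p}^1\bigl(u^{n+1}-(1-p)u^n\bigr)\,du$, which is elementary. An alternative that avoids integration entirely is to condition a $\mathrm{Binomial}(n+1,p)$ variable on its last trial, giving the recursion $\E\bigl(\frac{1}{1+Z_{n+1,p}}\bigr)=p\,\E\bigl(\frac{1}{2+Z_{n,p}}\bigr)+(1-p)\,\E\bigl(\frac{1}{1+Z_{n,p}}\bigr)$, and then solving for $\E\bigl(\frac{1}{2+Z_{n,p}}\bigr)$ by substituting \eqref{eq:expectation of 1/(1+Z)} with parameters $n$ and $n+1$.

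Either route lands on $\E\bigl(\frac{1}{2+Z_{n,p}}\bigr)=\frac{1}{p^2}\bigl(\frac{1-(1-p)^{n+2}}{n+2}-(1-p)\frac{1-(1-p)^{n+1}}{n+1}\bigr)$, so the only genuine work is the final algebraic cleanup: put the two fractions over the common denominator $(n+1)(n+2)$ and verify that the numerator collapses to $(1-p)^{n+2}+(n+2)p-1$. Here the cross-terms in $(1-p)^{n+1}$ and $(1-p)^{n+2}$ recombine, and one uses $(n+1)-(n+2)(1-p)=(n+2)p-1$. This bookkeeping is the main (and only mildly delicate) obstacle; everything else is a one-line appeal to the binomial theorem. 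Finally, the hypothesis $p>0$ is exactly what legitimizes the divisions by $p$ and $p^2$, while $p=1$ is unproblematic since then $Z_{n,p}=n$ almost surely and both formulas reduce to the obvious values $\frac{1}{n+1}$ and $\frac{1}{n+2}$.
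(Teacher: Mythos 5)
Your proposal is correct, and in fact it contains the paper's argument as one of the routes you name: the paper proves \eqref{eq:expectation of 1/(1+Z)} exactly via $\frac{1}{1+z}=\int_0^1 t^z\,\mathrm dt$, Fubini--Tonelli, and the probability generating function $\E(t^{Z_{n,p}})=((1-p)+pt)^n$, and proves \eqref{eq:expectation of 1/(2+Z)} via the decomposition $Z_{n+1,p}\overset{d}{=}Z_{n,p}+\mathrm{Bernoulli}(p)$ and the resulting recursion $\E\bigl(\frac{1}{1+Z_{n+1,p}}\bigr)=p\,\E\bigl(\frac{1}{2+Z_{n,p}}\bigr)+(1-p)\,\E\bigl(\frac{1}{1+Z_{n,p}}\bigr)$ --- precisely your ``alternative that avoids integration.'' Where you differ is in your preferred primary routes: for the first identity you use the discrete identity $\frac{1}{k+1}\binom{n}{k}=\frac{1}{n+1}\binom{n+1}{k+1}$ plus reindexing, which is the combinatorial shadow of the integral trick and avoids any appeal to Fubini; for the second you integrate $\int_0^1 t\,(1-p+pt)^n\,\mathrm dt$ directly with the substitution $u=1-p+pt$, which gets the answer in one closed computation rather than by solving a recursion, at the cost of the slightly heavier algebraic cleanup you correctly identify (and your final simplification to $(1-p)^{n+2}+(n+2)p-1$ checks out). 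Both of your routes are valid and land on the stated formulas; the paper's recursion has the mild advantage of recycling \eqref{eq:expectation of 1/(1+Z)} with no new integral, while your direct integral is self-contained and makes the structure $\E\bigl(\frac{1}{m+Z_{n,p}}\bigr)=\int_0^1 t^{m-1}(1-p+pt)^n\,\mathrm dt$ transparent for general shifts $m$. Your closing remark about $p>0$ versus $p=1$ is also accurate and matches the hypothesis $p\in(0,1]$ in the statement.
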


    \begin{proof}[Proof of Lemma \ref{lem:expectations}]
        First, for $z\in(-1,\infty)$, $\frac{1}{1+z} = \int_0^1 t^z\,\mathrm dt$. Therefore, using Fubini-Tonelli and recognizing $\E\left(t^{Z_{n,p}}\right)$ as the probability generating function of a binomial distribution, we get
        \begin{equation*}
            \E\left(\frac{1}{1+Z_{n,p}}\right) = \E\left(\int_0^1 t^{Z_{n,p}} \,\mathrm dt\right) = \int_0^1 \E\left(t^{Z_{n,p}}\right)\,\mathrm dt = \int_0^1 ((1-p) + p t)^n\,\mathrm dt.
        \end{equation*}
        By making the change of variables $s = (1-p) + pt$, one obtains that the latter integral equals the right-hand side of \eqref{eq:expectation of 1/(1+Z)}, thus proving \eqref{eq:expectation of 1/(1+Z)}.

        Second, for $n\in\N$, noting that $Z_{n+1,p}$ equals in distribution a $\mathrm{Bernoulli}(p)$-distributed random variable, independent of $Z_{n,p}$, plus $Z_{n,p}$, we get 
        \begin{equation*}
            \E\left(\frac{1}{1+Z_{n+1,p}}\right) = p \E\left(\frac{1}{2+Z_{n,p}}\right) + (1-p) \E\left(\frac{1}{1+Z_{n,p}}\right).
        \end{equation*}
        Rearranging and using \eqref{eq:expectation of 1/(1+Z)} yields \eqref{eq:expectation of 1/(2+Z)}.
    \end{proof}
    We are now in a position to prove Theorem \ref{thm:explicit}.
    \begin{proof}[Proof of Theorem \ref{thm:explicit}]
        Firstly, by conditioning on $X_{n-1}=X_n=1$, we obtain
        \begin{equation}\begin{split}
            &\E\left(\frac{X_{n-1}X_n}{X_1+\dots+X_{n-1}}\ \middle\vert\ D_1(X)\neq 0\right)\\ &= \E\left(\frac{1}{1+X_1+X_2+\dots+X_{n-2}}\right)\P\left(X_{n-1}=X_n=1\mid X_1+\dots+X_{n-1}>0\right) \\
            &= \E\left(\frac{1}{1+Z_{n-2,p}}\right) \frac{p^2}{1-(1-p)^{n-1}}.
        \end{split}\end{equation}
        From \eqref{eq:expectation of 1/(1+Z)} we thus get
        \begin{equation*}
            \E\left(\frac{X_{n-1}X_n}{X_1+\dots+X_{n-1}}\ \middle\vert\ D_1(X)\neq 0\right) = \frac{p}{n-1}.
        \end{equation*}
        Secondly, for $j\in\set{2,\dots, n-1}$, we have, by exchangeability of the $X_j$,
        \begin{equation}\label{eq:other-indices}\begin{split}
            \E\left(\frac{X_{j-1}X_j}{X_1+\dots+X_{n-1}}\ \middle\vert\ D_1(X)\neq 0\right) &= \E\left(\frac{X_1 X_2}{X_1+\dots+X_{n-1}}\ \middle\vert\ D_1(X)\neq 0\right) \\ 
            &= \E\left(\frac{1}{2+Z_{n-3,p}}\right)\frac{p^2}{1-(1-p)^{n-1}} \\
            &=\frac{\frac{p}{1-(1-p)^{n-1}}-\frac{1}{n-1}}{n-2},
        \end{split}\end{equation}
        where we have used \eqref{eq:expectation of 1/(2+Z)} for the last identity. We thus obtain
        \begin{equation*}
            \E(\hat P_1(X)\vert D_1(X)\neq 0) =\sum_{j=2}^n \E\left(\frac{X_{j-1}X_j}{X_1+\dots+X_{n-1}}\ \middle\vert\ D_1(X)\neq 0\right) =\frac{p}{1-(1-p)^{n-1}}+ \frac{p-1}{n-1},
        \end{equation*}
        thus establishing \eqref{eq:P1hat expectation}.
    \end{proof}

    \begin{remark}[Theorem \ref{thm:explicit} implies \eqref{eq:bias} for the case $k=1$]
        We note by full expansion and using $p\in(0,1)$ that
        \begin{equation*}\begin{split}
            &\frac{p}{1-(1-p)^{n-1}} + \frac{p-1}{n-1}<p \\ &\iff p (n-1) + (p-1) (1-(1-p)^{n-1}) < p (n-1) (1-(1-p)^{n-1}) \\
            &\iff (n-1) (1-p)^{n-2} < 1 +(n-2) (1-p)^{n-1}.
        \end{split}\end{equation*}
        The latter inequality, however, follows from the weighted AM-GM inequality, as 
        \begin{equation*}
            1+(n-2) (1-p)^{n-1}\ge (n-1) \sqrt[n-1]{((1-p)^{n-1})^{n-2}}=(n-1)(1-p)^{n-2}.
        \end{equation*} 
        Since $(1-p)^{n-1}\neq 1$, this inequality is strict.
        Therefore, Theorem \ref{thm:explicit} implies as Corollary \eqref{eq:bias} when $k=1$.
    \end{remark}

    \begin{remark}[The case $k>1$]
        The presented approach does not seem to generalize well to the case $k>1$, as it would necessitate computing quantities such as 
        \begin{equation*}
            \E\left(\frac{1}{1+X_1\dots X_k + X_2 \dots X_{k+1}+\cdots + X_{n-k}\dots X_{n-1}}\right).
        \end{equation*}
    \end{remark}

    \printbibliography

\end{document}